\begin{document}
\title{An exactly solvable quantum-metamaterial type model}
\author{A P Sowa$^{1}$, A M Zagoskin$^{2}$ }
\affiliation{$^{1}$Department of Mathematics and Statistics, University of Saskatchewan,
106 Wiggins Road,
Saskatoon, SK S7N 5E6,
Canada}
\affiliation{$^{2}${Department of Physics, Loughborough University, Loughborough, Leics LE11 3TU, UK}}

\maketitle
\newtheorem{definition}{Definition}
\newtheorem{theorem}{Theorem}
\newtheorem{proposition}{Proposition}
\newtheorem{lemma}{Lemma}
\newtheorem{corollary}{Corollary}
\newtheorem{algorithm}{Algorithm}
\newtheorem{conjecture}{Conjecture}

\begin{center}

 Abstract

 \end{center}
The key difficulty in the modelling of large quantum coherent structures lies in keeping track of nonlocal, multipoint quantum correlations between their constituent parts. Here we consider a special case of such a system, a fractal quantum metamaterial interacting with electromagnetic field, and show that it can be exactly solved by using a combination of the Haar transform and the Wigner-Weyl transform. Theoretical and experimental investigation of finite-size precursors  to exactly solvable fractal quantum structures  as this will help illuminate the behaviour of generic quantum coherent structures on a similar spatio-temporal scale.
\vspace{.2cm}

  \noindent KEYWORDS: quantum metamaterials, fractals, wavelets, multiresolution analysis
  \vspace{.2cm}

\noindent PACS classification: 03.65.Db, 03.65.Ud, 03.67.Bg, 42.70.-a, 78.67.Pt
\vspace{.2cm}

  \noindent AMS classification:  11K36, 42A99, 42C99, 11M35, 11M06

\section{Introduction} \label{Introduction}

The dimensionality of the Hilbert space of a quantum system grows exponentially with the number of its degrees of freedom (e.g., quantum bits). The recognition of this fact, relatively long ago, led to the conclusion that classical means are inadequate for simulating large quantum systems and, as a logical consequence, ushered in conceptualization of quantum computers, \cite{Feynman, Manin1, Manin2}. On the other hand, the theoretical analysis of large quantum systems has been restricted to special, though important, cases (e.g., those of factorized or nearly factorized quantum states), which can be described using standard methods of quantum many-body theory.
At the same time, it is already possible to build at least partially quantum coherent arrays of quantum bits, \cite{Zagoskin2, Zagoskin3}. There are reasons to believe that the properties of such systems depend strongly on long-range quantum correlations, \cite{Rakhmanov, Zagoskin}. 
This highlights the need for tractable models of large-scale quantum arrays that would capture nonlocal quantum coherence.
In this work we introduce a model of that kind, pertaining to a  quantum metamaterial interacting with electromagnetic field. The key concept underlying our model is a continuum limit of the interaction Hamiltonian which turns out to be a nonlocal operator in $L_2[0,1]$ that is endowed with scale-wise self-similarity. Our main result is a description of its properties via an application of multiresolution analysis. Subsequently, we obtain an explicit solution of the \emph{quantum metamaterial/electromagnetic field} system dynamics in some regimes.  The crucial finding is that propagation of the field is strongly affected by the quantum state of the metamaterial. Remarkably, our model incorporates nonlocal quantum coherence effects even in a low-dimensional approximation, thus effecting a type of \emph{compression} of the essential (nonlocal) features of the system dynamics.

\section{The model}

A linear oscillator (a single electromagnetic field mode) interacting with a two-level system (qubit) is described by the Hamiltonian (see e.g. \cite{Shore_Knight}, \cite{Gerry-Knight}, or  \cite{Fink}):

\[
\mathcal{H} = I\otimes \mathcal{H}_F + \mathcal{H}_q\otimes I +\mathcal{H}_I:\quad
\mathbb{H}_Q\otimes\mathbb{H}_F \longrightarrow
\mathbb{H}_Q\otimes\mathbb{H}_F.
\]
Here $\mathbb{H}_F = L_2(\mathbb{R})$ is the Hilbert space of the oscillator (e.g., a mode of electromagnetic field in a cavity), and  $\mathbb{H}_Q = \mbox{span}\{|g\rangle, |e\rangle\}$ that of the qubit. The field and qubit Hamiltonians have the standard form,
\begin{equation}
\mathcal{H}_F = \omega\left(\hat{a}^{\dagger}\hat{a} + \frac{1}{2}\right), \mathcal{H}_q = \frac{\Omega}{2}\left(|e\rangle\langle e|-|g\rangle\langle g|\right) \equiv -\frac{\Omega}{2}\sigma_z,
\label{eq:01}
\end{equation}
 where $\hat{a}$ ($\hat{a}^{\dagger}$) are bosonic creation (annihilation) operators, $[\hat{a},\hat{a}^{\dagger}]=1$, and $\sigma_z$ is one of the Pauli matrices. We can define the operators $\hat{a}$ in such a way that the electric field amplitude is proportional to $(\hat{a}+\hat{a}^{\dagger})$.

The interaction term describes the dipole-field interaction:
\begin{equation}\label{JC_HI}
\mathcal{H}_I = \lambda\, \sigma_x\otimes (\hat{a} + \hat{a}^\dagger):\quad \mathbb{H}_Q\otimes\mathbb{H}_F
\longrightarrow \mathbb{H}_Q\otimes\mathbb{H}_F;
\end{equation}
here
$\sigma_x = \sigma_+ + \sigma_-$, and $\sigma_+ = |e\rangle\langle g|$ and $\sigma_- = |g\rangle\langle e|$.

Generalizing to $K$ qubits, not interacting with each other, we get the interaction term in the form:
\begin{equation}\label{JC_HImult}
\mathcal{H}_K =  C_K\otimes (\hat{a} + \hat{a}^\dagger):\quad \left(\bigotimes_{k=1}^K\mathbb{H}_Q\right)\otimes\mathbb{H}_F
\longrightarrow \left(\bigotimes_{k=1}^K\mathbb{H}_Q\right)\otimes\mathbb{H}_F,
\end{equation}
where
\[
C_K = \sum_{k = 1}^{K}\lambda_k\, \sigma_x^{k}.
\]
Here, $\sigma_x^{k} = I\otimes \ldots I\otimes \sigma_x \otimes I \ldots$, where $I$ denotes the identity acting in $\mathbb{H}_Q$, and the Pauli matrix $\sigma_x$ is placed in the sequence in the $k^{\mbox{th}}$  place. Numerical experimentation shows that the matrix of $C_K$ looks like a discrete model of a certain self-similar set reminiscent of the famous Cantor set (Fig. 1).

\section{A continuum approximation for a self-similar chain of qubits} \label{subsection_continuum_lim}

We now embark upon examination of the continuum limit of a system with infinitely many qubits. In this limit, the space $\bigotimes_{k=1}^K\mathbb{H}_Q$ is replaced by $\mathbb{H}_{QMM} = L_2[0,1]$. In order to obtain a meaningful limit of the operator $C_K$, denoted $C: \mathbb{H}_{QMM}\rightarrow \mathbb{H}_{QMM}$, we assume that
\begin{equation}\label{scaling}
  \quad \lambda_k = \frac{\lambda}{2^k}
\end{equation}
For a finite-size approximation to our ideal system this scaling can be realised by the proper placement of qubits with respect to the (anti)nodes of the electromagnetic field mode. The corresponding operator $C$ is defined by first giving its explicit construction for a test function, and later extending it by continuity to the entire Hilbert space. To this end, let $\Phi = \Phi(x)$ be either continuous in $[0,1]$ or piecewise constant\footnote{The point is such functions are defined not just almost everywhere but everywhere.}. Moreover, consider $x\in (0,1]$ and its dyadic expansion $x = \sum_k \alpha_k(x)/2^k$. To avoid ambiguity we assume that dyadic rationals will always have an infinite expansion, e.g. $1/2 = .011111\ldots$. Then define
 \begin{equation}\label{def_C}
  C[\Phi](x) = \sum_k \frac{1}{2^k}\Phi\left(x  + \frac{(-1)^{\alpha_k(x)}}{2^k}\right)\quad x \in (0,1].
\end{equation}
Below, we will demonstrate that $C$ is bounded in the $L_2$ norm and, therefore, it can be extended to all of $\mathbb{H}_{QMM}$ by continuity. This operator realises the continuum limit of the interaction Hamiltonian (\ref{JC_HImult}); namely
\begin{equation}\label{JC_continuum}
\mathcal{H}_\infty =  \lambda\, \hat{C}\otimes (\hat{a} + \hat{a}^\dagger):\quad L_2[0,1]\otimes\mathbb{H}_F
\longrightarrow L_2[0,1]\otimes\mathbb{H}_F.
\end{equation}


\begin{figure}[ht!]
\includegraphics[width=160mm]{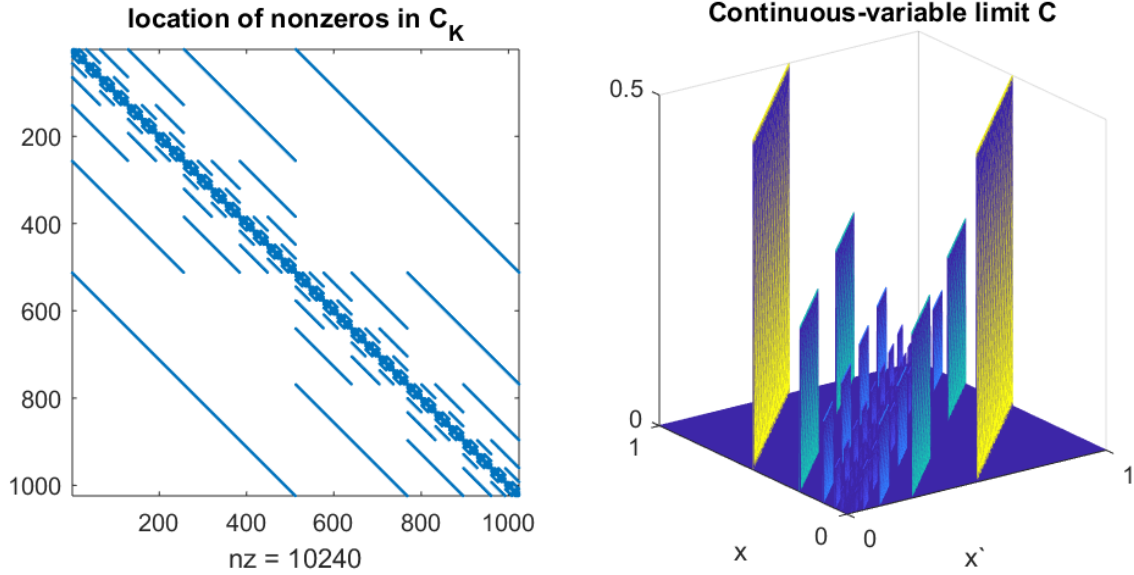}
\caption{The transition to the continuous limit with $C_K$: (a) The location of all nonzero entries of $C_K$ for $K=2^{10}$. Observe the sparsity (only $n_z = 10240$ entries are nonzero) as well as a suggestion of self-similarity. Indeed the diagonal quarter sub-blocks have the appearance of the entire matrix,  their diagonal sub-blocks, too, etc. (b) $C$ is obtained by going to the continuous limit. In the plot the vertical   bars are placed on the $(x,x')$ square $(0,1]^{\times 2}$. The bars may be interpreted as Dirac distributions, effecting the same outcome as the definition (\ref{def_C}) when $C$ acts on a continuous function. Note that, unlike for $C_K$, the self-similarity of $C$ is exact.}
\label{fig1}
\end{figure}

In a similar manner, we generalize the Hamiltonian $\mathcal{H}_q$ given in (\ref{eq:01}) to a $K$-qubit system Hamiltonian $V_K$, choosing  a  scaling of qubit excitation energies as that of coupling constants in (\ref{scaling}); namely
\begin{equation}\label{def_V_K}
  V_K = \sum_{k = 1}^{K}\frac{-\Omega_k}{2}\, \sigma_z^{k},  \quad\mbox{ where} \quad \Omega_k = \frac{\Omega}{2^k}.
\end{equation}
This specific scaling is chosen to ensure convergence, and in a finite-size system it can be realised through the qubit design. Note that $V_K$ is a diagonal matrix. In the continuum limit its action is replaced by the multiplicative potential, i.e.
  \begin{equation}\label{def_V}
  V[\Phi](x) = V(x) \Phi(x) = \left(x - \frac{1}{2}\right) \Phi(x).
\end{equation}
(The details of this observation are given in Theorem \ref{prop_V}.)
At this stage we can postulate the Hamiltonian for the entire system. First the underlying Hilbert space is $\mathbb{H}_{QMM} = L_2(0,1]\times \mathbb{H}_F = L_2(0,1]\times L_2(\mathbb{R})$; we will use the variable $x$ in the unit interval and variable $y$ in the line. The Hamiltonian acting in this space is given by:
  \begin{equation}\label{def_H_QMM}
\mathcal{H}_{QMM} = I\otimes \mathcal{H}_F + \lambda\, C \otimes (\hat{a} + \hat{a}^\dagger) +  V\otimes I.
\end{equation}
Observe that $\mathcal{H}_F$ is a differential operator in the $y$ variable, $V$ is linear potential in the $x$ variable, and $\hat{a} + \hat{a}^\dagger = \sqrt{2m\omega} \, \hat{y}$ is linear  in $y$. The interesting component is $C$, which is a nonlocal operator, acting via the $x$ variable, with an inherent self-similar structure.
\vspace{.5cm}

\noindent \emph{Remark 1.} Note that $C_K$ and $V_K$ are unitarily equivalent, up to scale. Indeed, since $u (-\sigma_z) u = \sigma_x$ for a unitary and self-adjoint matrix $ u = (\sigma_x - \sigma_z)/\sqrt{2}$, one has  $U*C_K*U \propto 2 V_K$ for the unitary and self-adjoint $U := u\otimes u\otimes \ldots \otimes u$. It is natural to ask if $C$ and $V$ also share some properties that unitarily equivalent ones would. In Section \ref{Section_C_Haar} we demonstrate that $C$ and $V$ have the same spectrum.
\vspace{.5cm}

It can be argued that, by extension of the finite-dimensional case, points in $[0,1]$ should be interpreted as factorized states of the qubit array. In this way, $x = \sum_k \alpha_k(x)/2^k$  corresponds to the factorized state $| \alpha_1 \,\alpha_2 \,\alpha_3 \,\ldots \rangle $ where we have identified $|0\rangle = |g\rangle$ and $|1\rangle = |e\rangle$. However, of course, no orthonormal basis of $L_2[0,1]$ exists whose elements could be identified with points in $[0,1]$. For one thing, a set of basis functions is countable while the interval is a continuum. This means that there is no natural way of representing \emph{factorized} states of an infinite array of qubits. Interpreting it more broadly, we might say that there is a zero probability that an infinite array of qubits will settle in a factorized state.

For a similar reason as above, the multiplier $V$ has no eigenfunctions in $L_2[0,1]$. On the other hand, it follows from Theorem \ref{theor_C}  (Section \ref{Section_C_Haar}) that operator $C$ does have  eigenfunctions (some examples are displayed in Fig. \ref{fig3}). Therefore, $C$ and $V$ are not unitarily equivalent. Moreover, the eigenfunctions of $C$ may only be regarded as nonlocal, i.e. not factorized.

\section{The quantum metamaterial Hamiltonian in the continuum limit} \label{Section_C_Haar}

First, recollect the structure of the Haar basis in $L_2[0,1]$, see e.g. \cite{Haar}, \cite{Wojtaszczyk}. Let $G(x) \equiv 1$ for $x\in(0,1]$ and $G(x)=0$ everywhere else on the real line. We will use notation $$G_{n,k}(x) = 2^{n/2} G(2^nx-k);$$ in particular $G_{0,0} = G$. Furthermore, let $H(x) = [G_{1,0}(x) -  G_{1,1}(x)]/\sqrt{2}$, and $$H_{n,k}(x) = 2^{n/2} H(2^nx-k).$$ Note that $H_{0,0} = H$. The following fundamental facts are  well known:

\begin{enumerate}
  \item
 Denote $V_n =\mbox{ span } \{G_{n,k}: k = 0, 1, 2,  \ldots 2^n-1 \}$. Then, $V_0 \subseteq V_1 \subseteq V_2 \ldots$, and
  \begin{equation*}
       L_2[0,1] = \bigcup_{n=0}^\infty V_n \quad (\mbox{multiresolution ladder}).
  \end{equation*}
  \item
  For $n = 0, 1, \ldots $ let $W_n = V_{n+1}/V_n$. Then $W_n = \mbox{ span } \{H_{n,k}: k = 0, 1, 2,  \ldots 2^n-1 \}$, so that
  \begin{equation}\label{direct_sum}
  L_2[0,1] = V_0\oplus \bigoplus_{n=0}^\infty W_n \quad (\mbox{direct sum decomposition}).
  \end{equation}
 In what follows we will make use of the orthogonal projections $\Pi_n: V_{n+1} \rightarrow W_n$.
    \item
It follows that the set of functions $\{G_{0,0}\}\cup \{H_{n,k}:  k = 0, 1, 2,  \ldots 2^n-1; n = 0,1,2, \ldots\}$ furnishes an orthonormal basis in $L_2[0,1]$ (the Haar basis). In all references to this basis we will assume the canonical order in $W_n$ to be according to increasing $k$, so that, overall, the order of basis functions is fixed to be:
\[
G_{0,0}, H_{0,0}, H_{1,0}, H_{1,1}, H_{2,0}, H_{2,1}, H_{2,2}, H_{2,3}, \ldots
\]
 We let $\mathcal{T_H}: L_2[0,1] \rightarrow \ell_2$ denote the Haar transform which assigns to a square integrable function, say, $f$ its ordered sequence of Haar coefficients:
 \[
c_0 =\int_{0}^{1} f(x)\, dx,  \mbox{ and }  c_{n,k} = \int_{0}^{1} f(x) H_{n,k}(x)\, dx.
\]
Clearly, $\mathcal{T_H}$ is a unitary transformation.
\end{enumerate}
The basic properties of $C$ are obtained by representing it in the Haar basis. Namely, we have the following:

\begin{theorem} \label{theor_C} Operator $C$ defined in (\ref{def_C}) extends to a continuous operator $C: L_2[0,1] \rightarrow L_2[0,1]$. More precisely, $C$ has the following properties:
\begin{enumerate}
\item
Let $D_n$ be the matrix of $\Pi_n C \Pi_n$ in the canonical basis $(H_{n,k})_{k=0}^{2^n-1}$. Then\footnote{We generally denote the $k\times k$ identity matrix by $I_k$, but skip the index when the matrix size is clear from context.},
  \begin{equation}\label{block_D}
  D_0 = [0], \quad \mbox{ while } \quad
    D_{n+1} = \frac{1}{2} \left[
                            \begin{array}{cc}
                              D_n & I \\
                              I & D_n \\
                            \end{array}
                          \right]\quad \mbox{ for } n\geq 0.
    \end{equation}
Also, if $n>0$, then $D_n$ is invertible, and its complete list of the eigenvalues is
\begin{equation}\label{eigsD}
\{\pm (2k+1)/2^n: \, k = 0, 1, \ldots 2^{n-1}-1\}.
\end{equation}
In particular, $C$ preserves the direct sum decomposition (\ref{direct_sum}), i.e.  $C V_0 = V_0$ and $C W_n \subseteq W_n$ for all $n$. Specifically,
\begin{equation}\label{blocks_C}
C  =\mathcal{T_H}^\dagger \left(I_1\oplus \bigoplus_{n=0}^\infty \, D_n \right)\mathcal{T_H}.
\end{equation}

\item
Let $E_n$, $n\geq 1$, be the diagonal matrix whose diagonal entries are as in (\ref{eigsD}) in the increasing order, and let $E_0 =[0]$. Also, let $u = (\sigma_x - \sigma_z)/\sqrt{2}$, and let $\mathcal{B}$ be the unitary operator in $\ell_2$ defined by $\mathcal{B} : = I_1\oplus I_1 \oplus \bigoplus_{n=1}^\infty \, u^{\otimes n}$.   Then,
\begin{equation}\label{diag_C}
   C = \mathcal{T_H}^\dagger\mathcal{B}^\dagger\left(I_1\oplus \bigoplus_{n=0}^\infty \, E_n\right)\mathcal{B}\,\mathcal{T_H}.
\end{equation}

\item
$C$ is a bounded self-adjoint operator in $L_2[0,1]$; its spectrum is $\sigma (C) = [-1/2, 1/2]$, and its norm $\|C\| = 1/2$.
\end{enumerate}
\end{theorem}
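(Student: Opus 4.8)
The plan is to read off everything from the diagonalisation already secured in part 2. By (\ref{diag_C}) we have $C=\mathcal{T_H}^\dagger\mathcal{B}^\dagger\Lambda\,\mathcal{B}\,\mathcal{T_H}$ with $\Lambda:=I_1\oplus\bigoplus_{n=0}^\infty E_n$, where every $E_n$ is a real diagonal matrix and $\mathcal{B}\,\mathcal{T_H}$ is unitary. A real diagonal operator is manifestly self-adjoint, and conjugation by a unitary preserves self-adjointness, the spectrum, and the operator norm; so it suffices to study $\Lambda$. I would first observe that the diagonal entries of $\Lambda$ are precisely the numbers listed in (\ref{eigsD}) over all scales $n\ge 1$, together with the entry from the $V_0$-block and the zero entry of $E_0=D_0$. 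This is a bounded set of reals, so $\Lambda$ is a bounded diagonal operator on $\ell_2$; this at once shows that $C$ is bounded and self-adjoint and completes the claim that $C$ extends by continuity to all of $L_2[0,1]$.

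Next I would determine the spectrum. For a bounded diagonal operator $\Lambda=\mathrm{diag}(\lambda_j)$ one has the standard identity $\sigma(\Lambda)=\overline{\{\lambda_j\}}$: each $\lambda_j$ is an eigenvalue, each accumulation point of $\{\lambda_j\}$ lies in the spectrum, and no other $\zeta$ can, because there $\mathrm{dist}(\zeta,\{\lambda_j\})>0$ makes $(\Lambda-\zeta)^{-1}$ a bounded diagonal operator. Since unitary conjugation preserves the spectrum, $\sigma(C)=\sigma(\Lambda)=\overline{S}$, where $S$ is the eigenvalue set assembled above. Collecting the dyadic numbers $\pm(2k+1)/2^n$ across all $n\ge 1$ and passing to the closure, I would identify $\overline{S}$ with the symmetric interval having endpoints $\pm 1/2$, giving $\sigma(C)=[-1/2,1/2]$ as stated.

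The norm then follows from self-adjointness, since $\|C\|$ equals the spectral radius $\sup\{|\lambda|:\lambda\in\sigma(C)\}$, whose value over $[-1/2,1/2]$ is $1/2$. As an independent check this agrees with the multiplier $V$ by $(x-1/2)$ of (\ref{def_V}): by Theorem \ref{prop_V} the spectrum of $V$ is the range $[-1/2,1/2]$ of $x-1/2$ and $\|V\|=1/2$, so $C$ and $V$ come out isospectral, exactly the coincidence anticipated in Remark 1. I expect the accumulation step of the middle paragraph to be the main obstacle: one must show that the scale-indexed families (\ref{eigsD}) are dense in the target interval and pin down its extreme points, for it is these extreme points that simultaneously fix the endpoints of $\sigma(C)$ and the value of $\|C\|$. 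By contrast, the complementary inclusion --- that $C$ has no spectrum outside $\overline{S}$ --- is the routine part, guaranteed by the block-diagonal structure inherited from the decomposition (\ref{direct_sum}).
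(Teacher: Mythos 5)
Your proposal proves only item 3 of the theorem; items 1 and 2 --- which carry the entire substance of the statement --- are assumed rather than proved. You open with ``By (\ref{diag_C}) we have $C=\mathcal{T_H}^\dagger\mathcal{B}^\dagger\Lambda\,\mathcal{B}\,\mathcal{T_H}$,'' but in a blind attempt nothing has secured (\ref{diag_C}): one must first show that $C$, defined pointwise by (\ref{def_C}), is well defined on piecewise-constant functions and maps each $V_n$ into itself (this follows from $C[G]=G$ and the self-similarity built into (\ref{def_C})); compute the matrix $C_1$ and the recurrence $C_{n+1}=\frac12\left[\begin{smallmatrix} C_n & I\\ I & C_n\end{smallmatrix}\right]$; conjugate by the $2^{n+1}\times 2^n$ matrix $J_n$ expressing the $H_{n,l}$ in the basis $(G_{n+1,k})$ to get $D_n=J_n^\dagger C_{n+1}J_n$, whence (\ref{block_D}); derive the eigenvalue recursion $\mu\mapsto(\mu\pm1)/2$ and obtain (\ref{eigsD}) by induction; and finally rewrite (\ref{block_D}) as $D_{n+1}=\frac12 I_2\otimes D_n+\frac12\sigma_x\otimes I_{2^n}$ and prove by induction that $u^{\otimes n}$ diagonalizes $D_n$, which is exactly (\ref{diag_C}). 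This is what the paper's proof consists of, and none of it appears in your text. The omission also makes one step circular: you use (\ref{diag_C}) to conclude that $C$ ``extends by continuity,'' but (\ref{diag_C}) only makes sense after $C$ has been analyzed on the dense span of the Haar basis, i.e., after item 1 is in place.

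For the part you do treat, your route coincides with the paper's step 3 (a real diagonal operator conjugated by a unitary; the spectrum of a diagonal operator is the closure of its diagonal entries; norm equals spectral radius for self-adjoint operators), and those general facts are correct. However, the ``main obstacle'' you flag --- identifying the closure of the assembled eigenvalue set with an interval of endpoints $\pm 1/2$ --- cannot be carried out as you describe: the list (\ref{eigsD}) already contains $\pm 3/4$ at $n=2$, $k=1$; its supremum over the blocks is $(2^n-1)/2^n\to 1$; and the $I_1$ block you yourself mention (coming from $C[G]=G$) contributes the eigenvalue $1$. Hence the closure of the set $S$ you assemble is $[-1,1]$, not $[-1/2,1/2]$, and the endpoint identification in your middle paragraph fails as stated. (The paper's own step 3 asserts the $[-1/2,1/2]$ closure without computation, so your plan inherits, rather than resolves, this tension with (\ref{eigsD}); any completed proof along your lines would have to confront it explicitly.)
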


\begin{proof} \emph{1.}
  First, whenever a function, say, $\Phi$ is piecewise constant, definition (\ref{def_C}) yields a well defined $C[\Phi]$. In particular $C$ has well-defined value for every Haar basis function. Clearly, $C[G] = G$ and, hence, $V_0$ is invariant under $C$.
  Next, taking into account the self-similar structure of $C$, it is easy to observe that $CV_n \subseteq V_n$ for all $n$. Accordingly, let $C_n$ be the matrix of $C|_{V_n}: V_n \rightarrow V_n$ in the basis
   $\{G_{n,k}:  k = 0, 1, 2,  \ldots 2^n-1\}$ ordered according to increasing $k$. One verifies directly that
      \[
   C_{1}= \frac{1}{2}\left[
          \begin{array}{cc}
             1 & 1 \\
             1 & 1\\
          \end{array}
        \right].
   \]
   Moreover, self-similarity implies the recurrence:
   \[
   C_{n+1}= \frac{1}{2}\left[
          \begin{array}{cc}
             C_n & I \\
             I &  C_n\\
          \end{array}
        \right].
   \]
Next, let $D_n$ be the matrix of $\Pi_n C_{n+1}\Pi_n$ in the basis $\{H_{n,k}:  k = 0, 1, 2,  \ldots 2^n-1\}$ (ordered according to increasing $k$).  Consider the $2^{n+1}\times 2^n$ matrix:
   \[
    J_n = \left[
            \begin{array}{cccc}
              1 &  &  &  \\
              -1 &  &  &  \\
               & 1 &  &  \\
               & -1 &  &  \\
               &  & \vdots &  \\
               &  & \vdots &  \\
               &  &  & 1 \\
               &  &  & -1 \\
            \end{array}
          \right],
    \]
      whose $l^{\mbox{th}}$ column consists of the coordinates of $H_{n,l}$ in the basis $(G_{n+1,k})$ of $V_{n+1}$. Hence, $D_n = J_n^\dagger C_{n+1} J_n$, which implies (\ref{block_D}), including $D_0 = 0$.

     Note that $D_1 = -\sigma_x/2$ and its eigenvalues are $\pm 1/2$. In general, let $\mu_{n,k}: k = 0, 1, \ldots 2^{n-1} -1$ be the eigenvalues of $D_n$. It follows from (\ref{block_D}) that $\mu$ is an eigenvalue of $D_{n+1}$ if and only if $\mu = (\mu_{n,k} \pm 1)/2$. Statement (\ref{eigsD}) follows from this observation by induction. Statement (\ref{blocks_C}) summarizes these findings.
     \vspace{.2cm}

     \noindent
      \emph{ 2.} Rewrite recurrence (\ref{block_D}) in the form
\[
D_{n+1} = \frac{1}{2} I_2\otimes D_n + \frac{1}{2} \sigma_x \otimes I_{2^n}.
\]
Observe that $u = (\sigma_x - \sigma_z)/\sqrt{2}$ is unitary and self-adjoint and diagonalizes $D_1$, i.e. $u^\dagger D_1 u = u D_1 u =  - \sigma_z/2$. It follows by induction that the unitary transformation $u^{\otimes n}$ diagonalizes $D_n$, and that $E_n = u^{\otimes n}D_nu^{\otimes n}$ has the eigenvalues of $D_n$ on its diagonal in the increasing order from top-left to bottom-right. This together with (\ref{blocks_C}) implies (\ref{diag_C}).
\vspace{.2cm}

     \noindent
    \emph{ 3.} The representation of $C$ given by (\ref{diag_C}) shows that it is a bounded and self-adjoint operator in $L_2[0,1]$ whose spectrum $\sigma (C)$ is the closure of the set of eigenvalues, which is the interval $[-1/2, 1/2]$. It is well known that for a self-adjoint operator its spectral radius is equal to its norm, hence $\|C\|=1/2$. This completes the proof of the theorem.
\end{proof}
\vspace{.2cm}

\noindent \emph{Remark 1.} One of the benefits of diagonalization (\ref{diag_C}) is a direct expression for $\exp itC$.
\vspace{.2cm}

\noindent \emph{Remark 2.} Definition (\ref{def_C}) and Theorem \ref{theor_C} indicate two alternative ways of implementing operator $C$ numerically. It ought to be emphasized that  a finite-dimensional approximation is bound to introduce some artifacts. In particular, the one based on a finite Haar transform  will yield a small (and decreasing with the dimension) diagonal component of $C$. (Note that the infinite dimensional $C$ as defined in (\ref{def_C}) has no ``diagonal" term.) A representation of $C$ in the Haar basis is illustrated numerically in Fig. 2.
\vspace{.5cm}

Next, we justify formula (\ref{def_V}). To this end it is convenient to introduce the periodized Haar function $H^\#$, defined by
\[
H^\#(x) = H(x\, \mbox{mod } 1)\quad \mbox{ for all } x\in \mathbb{R}.
\]
Note a relation to the Rademacher functions $r_{n} = \mbox{sgn} \,\sin (2^{n}\pi x)$, namely
\[
H^\#(2^{n}x) = r_{n+1}(x), \quad n= 0,1,\ldots
\]
 The continuum limit of $V_K$, as defined in (\ref{def_V_K}), is obtained via the following steps:
\begin{enumerate}
  \item Replace  $\sigma_z^{k} $ by $H^\# (2^{k-1} x)$. In particular, this has the effect of reducing the diagonal matrix $V_K$ to a function defined in $(0,1]$ or, indeed, almost everywhere (\emph{a.e.}) in $\mathbb{R}$.
  \item Pass to the limit as $K\rightarrow \infty$.
Thus, the continuum limit of $V_K$, denoted $V(x)$, is defined by
\begin{equation}\label{def_V_rigor}
 V(x) =  -\Omega \sum_{k=1}^\infty \frac{1}{2^{k+1}}H^\# (2^{k-1} x) .
\end{equation}
Thus, $V(x)$ is a periodic function (with period $1$) defined \emph{a.e.} in $\mathbb{R}$. However, in the applications, we only consider its restriction to the unit interval, $x\in (0,1]$.
\end{enumerate}

We are now in a position to demonstrate:

\begin{theorem}\label{prop_V}
The continuum limit of the free qubit Hamiltonian defined by (\ref{def_V_rigor}) satisfies
\[
V(x) =  \Omega\, \left(x-\frac{1}{2}\right) \,\mbox{ a.e. } \mbox{ in } (0,1].
\]
\end{theorem}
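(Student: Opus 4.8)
The plan is to reduce the series (\ref{def_V_rigor}) to a Rademacher series and then recognize it, digit by digit, as an affine function of the dyadic expansion of $x$. The setup in the text already does most of the packaging: the periodized Haar function is identified with the Rademacher functions through $H^\#(2^{k-1}x) = r_k(x)$, so the first step is simply to substitute this into the definition and obtain
\[
V(x) = -\Omega \sum_{k=1}^\infty \frac{1}{2^{k+1}}\, r_k(x).
\]
Because $|r_k(x)| \le 1$ for every $x$ and $\sum_k 2^{-(k+1)} < \infty$, the Weierstrass $M$-test gives uniform convergence of this series on $\mathbb{R}$; in particular $V$ is a well-defined bounded function and the termwise rearrangement performed below is legitimate.

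Next I would invoke the classical identification of the Rademacher functions with the binary digits. Writing $x = \sum_{k\ge 1}\alpha_k(x)/2^k$ for the dyadic expansion used throughout the paper, one has $r_k(x) = (-1)^{\alpha_k(x)} = 1 - 2\alpha_k(x)$. This is the step where the ``almost everywhere'' qualifier genuinely enters: the identity holds at every point whose expansion is unambiguous and can fail only at the dyadic rationals, which form a set of measure zero (the paper's convention of always taking the infinite expansion fixes a consistent choice there in any case).

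Substituting and splitting the sum into its constant and digit-dependent parts then gives
\[
V(x) = -\Omega \sum_{k=1}^\infty \frac{1}{2^{k+1}} + \Omega \sum_{k=1}^\infty \frac{\alpha_k(x)}{2^{k}} = -\frac{\Omega}{2} + \Omega\, x,
\]
where I have used the geometric sum $\sum_{k\ge1}2^{-(k+1)} = 1/2$ and recognized $\sum_{k\ge1}\alpha_k(x)2^{-k} = x$ as the dyadic expansion of $x$ itself. This yields $V(x) = \Omega\left(x - \tfrac{1}{2}\right)$ \emph{a.e.} in $(0,1]$, as claimed.

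The only genuinely delicate point is the bookkeeping of the measure-zero exceptional set: the Rademacher/binary-digit identity, and the value of the series at the dyadic rationals, must be handled so that the final equality is asserted only almost everywhere rather than everywhere. Everything else—the uniform convergence and the two elementary summations—is routine, which is why the conclusion of the theorem is stated with the \emph{a.e.} qualifier.
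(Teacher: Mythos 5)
Your proof is correct, but it takes a genuinely different route from the paper's. The paper stays entirely inside the Haar framework: it uses the identity $H^\#(2^k x) = 2^{-k/2}\sum_{l=0}^{2^k-1} H_{k,l}(x)$ (a.e.) to read off the Haar coefficients of $V$ from the defining series, computes $\int_0^1 (x-\tfrac12)\,H_{k,l}(x)\,dx = -2^{-3k/2-2}$ directly, observes that the two sets of coefficients coincide, and concludes the a.e.\ equality from the fact that an $L_2$ function is determined up to a null set by its Haar coefficients (completeness of the Haar basis). You instead argue pointwise: off the dyadic rationals you identify each term of the series with a binary digit via $r_k(x) = 1 - 2\alpha_k(x)$, so the series literally reassembles the dyadic expansion of $x$ and sums to $-\Omega/2 + \Omega x$. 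Your approach is more elementary --- it needs no $L_2$ theory or basis completeness --- and in one respect sharper: it exhibits the exceptional set explicitly as a subset of the countable set of dyadic rationals, rather than merely an unspecified null set. What the paper's route buys is coherence with the rest of the analysis (the Haar basis is the central tool of the whole paper, and the proof produces the Haar coefficients of $V$ as a by-product, in a form directly comparable with the block representation of $C$), and it sidesteps the digit-convention bookkeeping that you rightly flag as the delicate point of your argument. One minor remark: the Weierstrass $M$-test is slightly more than you need --- absolute convergence of the two split series already justifies the rearrangement --- but this is harmless.
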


\noindent \emph{Proof.} First, observe the identity
\[
H^\# (2^{k} x) = \frac{1}{2^{k/2}}\sum_{l=0}^{2^{k}-1} H_{k,l}(x) \quad \mbox{ a.e. in }  (0,1].
\]
(Note that the right hand side contains the sum of all the Haar functions that furnish the basis of $W_k$.) This together with (\ref{def_V_rigor}) indicates that the Haar coefficients of $V(x)$ are given by
\[
c_0 = 0, \mbox{ and }  \, c_{k,l} = - 2^{-3k/2 - 2}\, \mbox{ for all }\, k \geq 0.
\]
On the other hand, a direct calculation gives
\[
\int_{0}^{1} \left(x-\frac{1}{2}\right) H_{k,l}(x) \, dx = - 2^{-3k/2 - 2}.
\]
Finally, since the Haar coefficients of $V(x)$ coincide with those of $x-1/2$, the two functions are equal (\emph{a.e.}). $\Box$
\vspace{.5cm}

\begin{figure}[ht!]
\includegraphics[width=90mm]{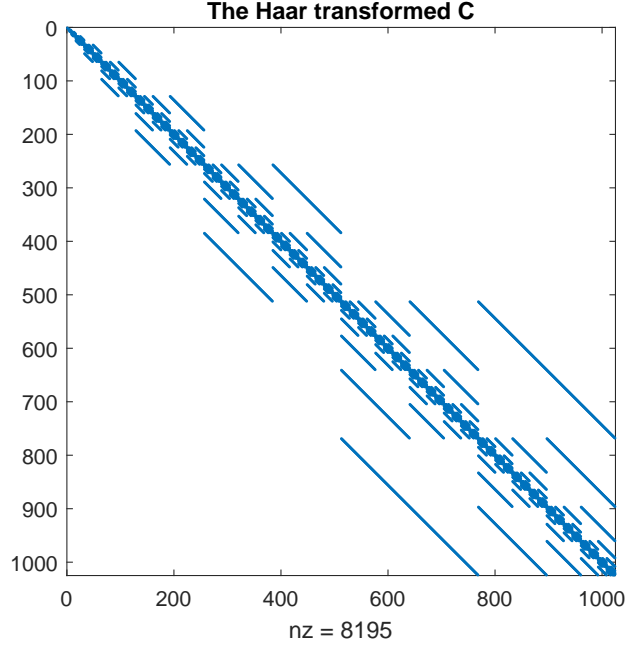}
\caption{Nonzero elements of $\mathcal{T_H} C \mathcal{T_H}^\dagger =  I_1\oplus \bigoplus_{n=0}^\infty \, D_n $ (i.e. $C$ represented in the Haar basis). Observe the ``telescope" structure. Also, the displayed finite-dimensional block ($K=10$ qubits) has relatively fewer nonzero elements than $C_K$ (indeed,  $nz = 8195$), i.e. the Haar basis effects a numerical compression of $C_K$. Most importantly, the Haar representation makes it possible to demonstrate that the eigenvectors of $C$ are spread over all qubits (delocalized), cf. Fig 3. This in turn sheds light on the role of long-range coherence in the dynamic of a quantum metamaterial coupled to a resonator, see Section \ref{Section_explicit_dynamics}.}
\label{fig2}
\end{figure}

\section{Explicit solutions of the system dynamic in the limit $\Omega \to 0$} \label{Section_explicit_dynamics}

Some crucial properties of the Hamiltonian (\ref{def_H_QMM}) become manifest in the case when $\Omega$ (scale of qubit excitation energies) is  negligible compared to the energy of the field mode, $\omega$, and the qubit-field interaction scale, $\lambda$.
In the light of Theorem \ref{prop_V} this amounts to modifying the Hamiltonian by erasing a compact component. Thus, setting $\Omega = 0$ we examine
  \begin{equation}\label{def_H_QMMprime}
\mathcal{H'}_{QMM} = I\otimes \mathcal{H}_F + \lambda\, C \otimes (\hat{a} + \hat{a}^\dagger).
\end{equation}
Let $\rho:  \mathbb{H}_{QMM}\otimes \mathbb{H}_{F} \rightarrow \mathbb{H}_{QMM}\otimes \mathbb{H}_{F}$ represent the state of the QMM and field system. It evolves according to the Heisenberg equation:
\[
 i\,\partial_t \rho = [ \mathcal{H'}_{QMM} , \rho].
\]
We make an Ansatz $$\rho = |\Phi_{n,k,s}\rangle\langle \Phi_{n,k,s}| \otimes \rho_F.$$ Here, $\rho_F: \mathbb{H}_F \rightarrow\mathbb{H}_F$; $s = \pm$ and $\Phi_{n,k,s}$ is an eigenstate of $C$ corresponding to the eigenvalue $E_{n,k,s} = s (2k+1)/2^n$, where $k = 0, 1,2 \ldots 2^{n-1}-1$ for $n = 1,2,\ldots$. Thus, we look for such solutions of the Heisenberg equation in which the state of the QMM is determined. Substituting into (\ref{def_H_QMMprime}) we find that the field part $\rho_F$ satisfies
\[
 i\,\partial_t \rho_F = [  \mathcal{H}_F + \lambda E_{n,k,s} (\hat{a} + \hat{a}^\dagger), \rho_F].
\]
In order to analyze solutions of this equation it is convenient to pass to the Wigner representation of $\rho_F$, see e.g. \cite{Gardiner-Zoller} or \cite{Gosson}. Namely, set
\[
f(q,p) = \int_\mathbb{R}d\xi_1\int_\mathbb{R}d\xi_2\,
e^{-2\pi i (\xi_1q + \xi_2p)}\, \mbox{ Tr }\left(W_{\xi_1,\xi_2}\, \rho_F\right),
\]
where $ W_{\xi_1,\xi_2} = e^{ 2\pi i (\xi_1\hat{q} + \xi_2 \hat{p})} $. As is well-known the transform is reversible, so that $f$ and $\rho_F$ are in one-to-one correspondence. A calculation shows that the Heisenberg equation is transformed into the following first-order PDE:
\[
\partial_t f = (q + \lambda E_{n,k,s})\, \partial_p f - p\, \partial_q f.
\]
This is easily solved via the method of characteristics. It turns out that the characteristics are circles centered at $(-\lambda E_{n,k,s},0)$ in the $(q,p)$ plane. In other words,
$
f(t, q, p) = f(0, q(-t), p(-t))$, where
\[
\left[
  \begin{array}{c}
    q(t) \\
    p(t) \\
  \end{array}
\right] =
\left[
  \begin{array}{cc}
    \cos t  & -\sin t \\
    \sin t & \cos t \\
  \end{array}
\right]
\left[
  \begin{array}{c}
    q +\lambda E_{n,k,s} \\
    p \\
  \end{array}
\right]
-
\left[
  \begin{array}{c}
    \lambda E_{n,k,s} \\
    0 \\
  \end{array}
\right].
\]

  This reveals how light propagation responds to the QMM state. Namely, by selecting an appropriate QMM state, say, by measurement on this subsystem, one automatically selects the center of rotation in the Wigner-Weyl  $(q,p)$ plane. These points form a dense subset of the interval $[-\lambda, \lambda ]$ in the $q$-axis.  Examples of eigenstates $|\Phi_{n,k,s}\rangle$ are given in Fig.\ref{fig3}; these are Walsh-type functions.
\vspace{.2cm}

\noindent
\emph{Remark.}
The effect of QMM state on field propagation may be described more abstractly via a shift of the Hamiltonian. Indeed, one may observe
$$
\mathcal{H}_F +  E_{n,k,s}\lambda\, (\hat{a} + \hat{a}^\dagger) = \mathcal{T}_{n,k,s}^\dagger \mathcal{H}_F \mathcal{T}_{n,k,s} - \frac{1}{2} (E_{n,k,s}\lambda )^2,
$$
where $ \mathcal{T}_{n,k,s}$ is the displacement operator $\exp (\lambda E_{n,k,s}\,\frac{d\,}{dQ})$ with $Q = \hat{a} + \hat{a}^\dagger$. Thus, changing the state of the metamaterial has the effect of shifting the field Hamiltonian in the position-momentum space along the position direction (to the left or to the right depending on the sign $s$).

\begin{figure}[ht!]
\includegraphics[width=160mm]{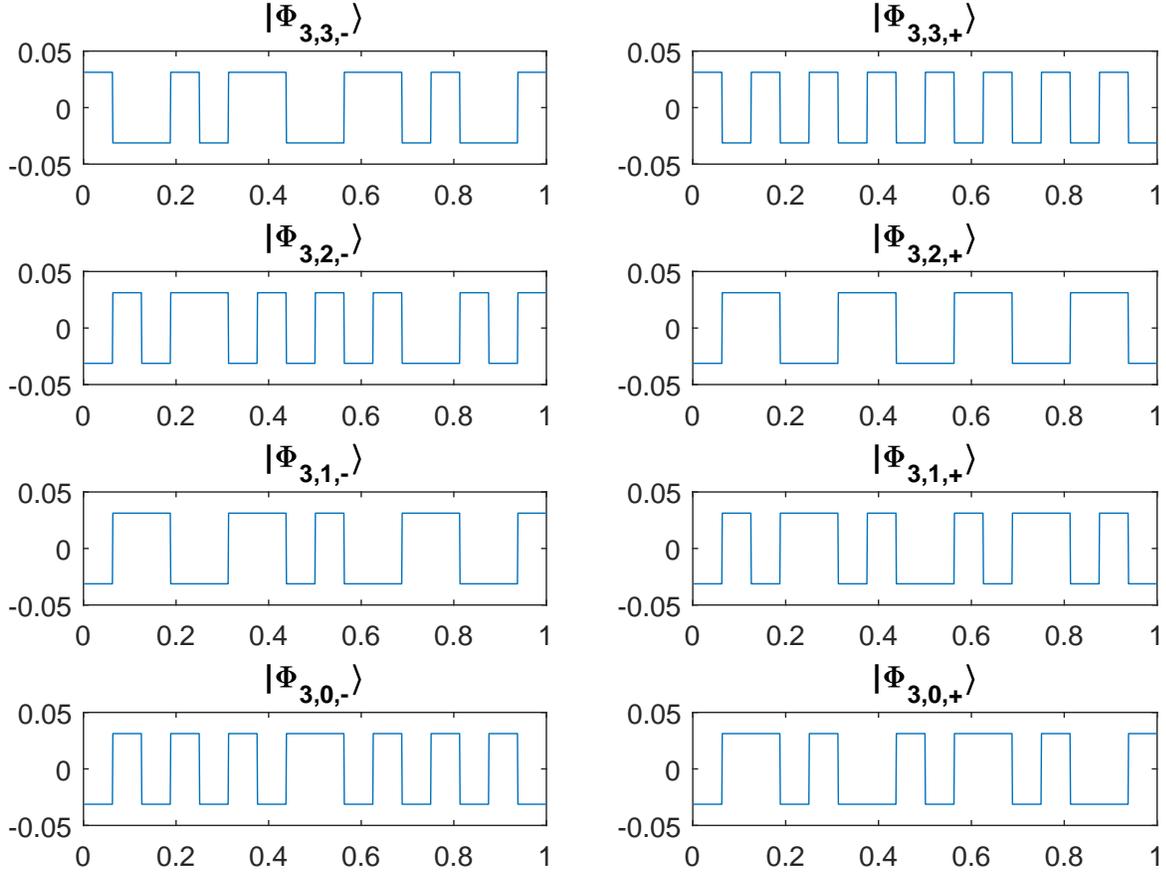}
\caption{Examples of eigenstates of the QMM in the $\Omega \sim 0$ regime, i.e. eigenfunctions of $C:L_2[0,1] \rightarrow L_2[0,1]$. }
\label{fig3}
\end{figure}

\section*{Summary}
We have examined  a continuum limit  model of a structure comprising an infinite array of qubits coupled to a single electromagnetic field mode in a certain specific way. Applying the methods of multiresolution analysis, we have found an explicitly solvable solution of its dynamic. The model incorporates nonlocal quantum coherence and, as such, provides fresh insights for further theoretical and experimental investigations of quantum coherent structures.

It is interesting to reflect on the evolving role of Harmonic Analysis in the study of periodic quantum structures. An application of the Fourier methods to quantum spin systems, originating in \cite{Baruch_1, Baruch_2}, is now routine. Another example is the Repeat Space Theory of large molecules and carbon nanotubes, \cite{ Arimoto2,  Arimoto4}, which draws on special properties of Toeplitz matrices, \cite{Taylor}.  At the same time, we know of no prior instance of a ``quantum" application of the Haar transform, which here emerged in the study of a scaled qubit array.

\section*{Acknowledgments}  We acknowledge helpful discussions  with Dr. Patrick Navez. We also acknowledge partial support of the University of Saskatchewan AMBASSADOR program. AZ was partially supported by the Ministry of Science and Higher Education of the Russian Federation in the framework of Increase Competitiveness Program of NUST \guillemotleft MISiS\guillemotright,  No. K2-2017-085.


\theendnotes

\end{document}